\numberwithin{equation}{section} 
 \newtheorem{lemma}{Lemma}[section]
 \newtheorem{theorem}[lemma]{Theorem}
 \newtheorem{claim}[lemma]{Claim}
 \newtheorem{rem}[lemma]{Remark}
 \newtheorem{ex}[lemma]{Example}
\begin{document}

\title{Nonapproximablity of the Normalized Information Distance}
\author{Sebastiaan A. Terwijn\thanks{SAT is with the University of Amsterdam;
Email terwijn@logic.at}, Leen Torenvliet\thanks{LT is with the University 
of Amsterdam; Email leen@science.uva.nl}, and Paul M.B. Vit\'{a}nyi\thanks{
PMBV is with the CWI and the University of Amsterdam; Email Paul.Vitanyi@cwi.nl}}
\maketitle

\begin{abstract}
Normalized information distance (NID) 
uses the theoretical notion of Kolmogorov complexity, which
for practical purposes is approximated by the length
of the compressed version of the file involved, using
a real-world compression program. This practical application is 
called `normalized compression distance' and it is trivially computable.
It is a parameter-free similarity measure
based on compression, and is used in pattern recognition, data mining,
phylogeny, clustering, and classification.
The complexity properties of its theoretical precursor, the NID, have been
open. We show that the NID is neither upper
semicomputable nor lower semicomputable up to any reasonable precision.

{\em Index Terms}---
Normalized information distance, 
Kolmogorov complexity,
nonapproximability.
\end{abstract}

\section{Introduction}
\label{sect.intro}

The classical notion of
Kolmogorov complexity \cite{Ko65} is an objective measure
for the information in 
a {\em single} object, and information distance measures the information
between a {\em pair} of objects \cite{BGLVZ98}. This last notion has
spawned research in the theoretical direction, among others
\cite{CMRSV02,VV02,Vy02,Vy03,MV01,SV02}.
Research in the practical direction has focused on
the normalized information distance (NID), also called the similarity metric, which arises
by normalizing the information distance in a proper manner. If we also
approximate the Kolmogorov complexity through real-world
compressors \cite{Li03,CVW03,CV04}, 
then we obtain the
normalized compression distance (NCD). 
This is a parameter-free, feature-free, and alignment-free
similarity measure  that
has had great
impact in applications.
The NCD was preceded by a related 
nonoptimal distance \cite{LBCKKZ01}.
In \cite{KLRWHL07} another variant of the NCD has been tested
on all major time-sequence databases used in all major data-mining conferences
against all other major methods used. The compression method turned out
to be competitive in general 
and superior in
heterogeneous data clustering and anomaly detection.
There have been many applications in pattern recognition, phylogeny, clustering,
and classification, ranging from
hurricane forecasting and music to
to genomics and analysis of network traffic,
see the many 
papers referencing
\cite{Li03,CVW03,CV04}
in Google Scholar. 
The NCD is trivially computable.
In \cite{Li03} it is shown that its theoretical precursor, the NID, is 
a metric up to negligible discrepancies
in the metric (in)equalities and that it is always between 0 and 1.

The computability status of the NID has been open,
see Remark VI.1 in \cite{Li03} which asks whether the NID is 
upper semicomputable, and (open) Exercise 8.4.4 (c) in the textbook
\cite{LiVi08} which asks whether the NID is semicomputable at all. 
We resolve this question by showing the following.
\begin{theorem}
Let $x,y$ be binary strings of length $n$ and denote the NID between them 
by $e(x,y)$.

(i) There is no upper semicomputable function $g$ such that
$|g(x,y) - e(x,y)| \leq (\log n)/ n$ (Lemma~\ref{lem.upper}).

(ii) There is no lower semicomputable function $g$ such that
$|g(x,y) - e(x,y)| \leq 1/2$ (Lemma~\ref{lem.lower}).
\end{theorem}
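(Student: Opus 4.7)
The plan is to argue by contradiction in both parts, using the hypothetical $g$ to extract semicomputable information about plain Kolmogorov complexity $K$ that contradicts its well-known nonapproximability. Writing the NID as
\[
e(x,y) \;=\; \frac{\max\{K(x\mid y),\, K(y\mid x)\}}{\max\{K(x),\, K(y)\}},
\]
both numerator and denominator are upper semicomputable, so semicomputing the ratio in either direction forces control of the denominator $\max\{K(x),K(y)\}$ from below, which is the forbidden direction for $K$.

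For part~(i), I would assume $g$ is upper semicomputable with $|g - e| \le (\log n)/n$ and consider pairs of the form $(yr, y)$ with $|r| = m$ comparable to $n$. The chain rule and symmetry of information give $K(yr\mid y) = K(r\mid y) + O(\log n)$, $K(y\mid yr) = O(\log n)$, and $K(yr) = K(y) + K(r\mid y) + O(\log n)$, so that
\[
e(yr, y) \;=\; \frac{K(r\mid y)}{K(y) + K(r\mid y)} \;\pm\; O\!\left(\tfrac{\log n}{n}\right).
\]
This inverts to $K(y) = K(r\mid y)\bigl((1-e)/e\bigr) + O(\log n)$, and since the denominator in this formula is $\Theta(n)$, the $(\log n)/n$ precision on $e$ translates into $O(\log n)$ precision on $K(y)$. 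An upper semicomputable upper bound on $e$ coming from $g$ then produces a lower semicomputable lower bound on $K(y)$; taking a supremum over a suitable family of auxiliary $r$'s (most of which satisfy $K(r\mid y) \approx m$ by a counting argument) yields a lower semicomputable function $L(y)$ with $|L(y) - K(y)| \le O(\log n)$ for every $y$ of length $n$. The standard incompressibility argument forbids any such approximation of $K$ to $o(n)$ additive precision, giving the contradiction.

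For part~(ii), I would assume $g$ is lower semicomputable with $|g - e| \le 1/2$. The c.e.\ sets of the form $\{(x,y) : g(x,y) > c\}$ for thresholds $c$ slightly below $1/2$ then provide a c.e.\ separation of pairs with $e(x,y) = 0$ (for which $g \le 1/2$) from pairs with $e(x,y)$ at the upper extreme (where the approximation bound and the slight excess of $e$ above $1$ arising from $O((\log n)/n)$-terms force $g > c$). Since $e(x,y)$ close to $1$ is equivalent to Kolmogorov independence of $x$ and $y$, any such c.e.\ enumeration amounts to c.e.-enumerating (up to vanishing precision) the independence relation, which is provably not c.e.\ by a diagonal argument involving $K$. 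I plan to convert this into an explicit contradiction via a recursion-theoretic diagonal construction producing, for each $n$, a specific pair $(x_n, y_n)$ of length $n$ whose membership in the c.e.\ set is incompatible with the true value of $e(x_n, y_n)$.

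The main technical obstacle in part~(i) is ensuring that the derived lower bound on $K(y)$ is \emph{valid} for every $r$ used in the supremum, not merely tight for good $r$: bad $r$'s with $K(r\mid y) \ll m$ would na\"ively produce spuriously large derived values, so the construction must use an adaptive estimate (for instance upper-bounding $K(r\mid y)$ by $|r|$ and filtering by the trivial ceiling $|y| + O(\log n)$) so that every contribution to the supremum is a valid lower bound. In part~(ii), the obstacle is extracting concrete structural information from the very coarse $1/2$-precision, which only distinguishes the two extreme values of $e$; connecting this to a specific non-c.e.\ property of $K$ requires a careful choice of test pairs at each length $n$, exploiting the small but nonzero excess of $e$ above $1$ to make the c.e.\ separation nontrivial.
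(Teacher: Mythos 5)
Your proposal diverges from the paper in both parts, and in both parts the step you yourself flag as the ``main technical obstacle'' is a genuine, unresolved gap rather than a detail. For part~(i) the paper does something much more economical: it evaluates $e$ on the diagonal, $e(x,x)=1/K(x)$, so that an upper semicomputable approximation of $e$ to within $(\log n)/n$ directly yields an effective approximation of $K(x)$ from $x$, contradicting G\'acs' theorem that $K(K(x)\mid x)\geq \log n-\log\log n-O(1)$ for some $x$ of each length. Your off-diagonal route via pairs $(yr,y)$ and symmetry of information could in principle work, but the supremum over $r$ is unsound exactly as you fear, and the proposed repair does not fix it: substituting the upper bound $|r|=m$ for the unknown $K(r\mid y)$ in the inversion $K(y)=K(r\mid y)\,(1-e)/e$ inflates the derived value by the factor $m/K(r\mid y)$, so an $r$ with $K(r\mid y)\approx m/2$ yields roughly $2K(y)$, which passes your ceiling filter of $|y|+O(\log n)$ whenever $K(y)\leq n/2$ yet is not a valid lower bound on $K(y)$. (The sound direction is the opposite one: every derived value \emph{over}-estimates $K(y)$, so you would want the \emph{minimum} over the finitely many $r$ of length $m$ --- still lower semicomputable, since each term increases as the upper approximation of $g$ decreases --- and then verify tightness at an incompressible $r$. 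You would also have to reconcile $|yr|\neq|y|$ with the theorem's restriction to equal-length pairs.) As written, the supremum-based $L(y)$ is not a lower bound on $K$, so the final appeal to the nonexistence of unbounded lower semicomputable lower bounds on $K$ does not apply.

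For part~(ii) the essential content of the paper is a concrete construction that your proposal only names as a goal. The paper builds, for every computable time bound $t$, pairs $x,y=x\oplus a$ of length $n$ with $K(x)\geq n$, $E(x,y)=O(1)$ (so $e(x,y)=O(1/n)$, essentially $0$), and yet $E^t(x,y)\geq n-O(1)$; the string $a$ is $O(1)$-describable from $n$ but $t$-time-random given $x$. This is precisely what defeats a lower semicomputable $g$: any monotone approximation $g_i\uparrow g$ must, at some computable stage $s$, rise to match the current upper approximation $E_s(x,y)/(n+2\log n+O(1))$, and the time-bound trick converts that stage into a computable $t$ for which $E_s(x,y)\geq E^t(x,y)\geq n-O(1)$, forcing $g$ (hence $e$) to be $\Omega(1)$ and contradicting $e=O(1/n)$. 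Your reduction to ``the Kolmogorov independence relation is not c.e.'' is exactly the statement that needs proving, and no proof is offered; moreover the threshold device $\{(x,y):g(x,y)>c\}$ with $c$ just below $1/2$ does not give a separation, because with error bound exactly $1/2$ both sides are controlled only up to $O(1/n)$ additive slack --- pairs with $e=O(1/n)$ may still have $g>1/2$, and pairs with $e$ near $1$ are only forced to have $g\geq 1/2$, not $g>c$ --- so membership in the putative c.e.\ separator is not determined on either side. Without a construction of witness pairs whose apparent (effectively observable) distance stays large while their true distance is $O(1)$, the argument does not close.
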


\section{Preliminaries}

We write {\em string} to mean a finite binary string,
and $\epsilon$ denotes the empty string.
The {\em length} of a string $x$ (the number of bits in it)
is denoted by $|x|$. Thus,
$|\epsilon| = 0$.
Moreover, we identify strings with natural numbers  
by associating each string with its index
in the length-increasing lexicographic ordering
\[
( \epsilon , 0),  (0,1),  (1,2), (00,3), (01,4), (10,5), (11,6), 
\ldots . 
\]
Informally, the Kolmogorov complexity of a string
is the length of the shortest string from which the original string
can be losslessly reconstructed by an effective
general-purpose computer such as a particular universal Turing machine $U$,
\cite{Ko65}.
Hence it constitutes a lower bound on how far a
lossless compression program can compress.
In this paper we require that the set of programs of $U$ is prefix free
(no program is a proper prefix of another program), that is, we deal
with the {\em prefix Kolmogorov complexity}. 
(But for the results in this paper it does not matter whether we use
the plain Kolmogorov complexity or the prefix Kolmogorov complexity.)
We call $U$ the {\em reference universal Turing machine}.
Formally, the {\em conditional prefix Kolmogorov complexity}
$K(x|y)$ is the length of the shortest input $z$
such that the reference universal Turing machine $U$ on input $z$ with
auxiliary information $y$ outputs $x$. The
{\em unconditional prefix Kolmogorov complexity} $K(x)$ is defined by
$K(x|\epsilon)$.
For an introduction to the definitions and notions
of Kolmogorov complexity (algorithmic information theory)
see ~\cite{LiVi08}.

Let $\cal N$ and $\cal R$ denote
the nonnegative integers and the real numbers,
respectively. 
A function $f: {\cal N} \rightarrow {\cal R}$ is
{\em upper semicomputable} (or $\Pi_1^0$)
if it is defined by a rational-valued computable
function  $\phi (x,k)$ where $x$ is a string
and $k$ is a nonnegative integer
such that $\phi(x,k+1) \leq \phi(x,k)$ for every $k$ and
  $\lim_{k \rightarrow \infty} \phi (x,k)=f(x)$.
This means
  that $f$ can be computably approximated from above.
A function $f$ is
{\em lower semicomputable} (or $\Sigma_1^0$) 
  if $-f$ is upper semicomputable.
 A function is called
{\em semicomputable} (or $\Pi_1^0 \bigcup \Sigma_1^0$)
  if it is either upper semicomputable or lower semicomputable or both.
A function
$f$ is {\em computable}
(or recursive)
  iff it is both upper semicomputable and
lower semicomputable (or $\Pi_1^0 \bigcap \Sigma_1^0$).
Use $ \langle \cdot \rangle$
as a {\em pairing
function}
over ${\cal N}$ to associate a unique natural number $\langle x, y \rangle$
with each pair $(x, y)$ of natural numbers.
An example is $\langle x, y \rangle$ defined by
$y+(x+y+1)(x+y)/2$.
In this way we can extend the above definitions to functions of two
nonnegative integers, in particular to distance functions.

The {\em information distance} $D(x,y)$ between strings $x$ and $y$
is defined as 
\[
D(x,y)= \min_{p} \{|p|: U(p,x)=y \wedge U(p,y)=x \},
\]
where $U$ is the reference universal Turing machine above.
Like the Kolmogorov complexity $K$, the distance function $D$ 
is upper semicomputable. Define
\[E(x,y)= \max \{K(x|y),K(y|x)\}.
\]
In \cite{BGLVZ98} it is shown that
the function $E$ is upper semicomputable, 
$D(x,y)= E(x,y)+O(E(x,y))$, the function $E$ is a metric (more precisely,
that it satisfies the metric (in)equalities up to a constant),
and that $E$ is minimal (up to a constant) among all 
upper semicomputable distance functions $D'$ satisfying the mild
normalization conditions $\sum_{y:y \neq x} 2^{-D'(x,y)} \leq 1$ and
$\sum_{x:x \neq y} 2^{-D'(x,y)} \leq 1$ (the minimality property
was relaxed from metrics \cite{BGLVZ98} to symmetric distances
\cite{Li03} to the present form \cite{LiVi08} without serious
proof changes).
The {\em normalized information distance} $e$ is defined by
\[
e(x,y) = \frac{E(x,y)}{\max\{K(x),K(y)\}}.
\]
It is straightforward that $0 \leq e(x,y) \leq 1$ up to some minor
discrepancies for all $x,y \in \{0,1\}^*$.
Since $e$ is the ratio between two upper semicomputable functions,
that is, between two $\Pi_1^0$ functions, it is a $\Delta_2^0$ function.
That is, $e$  is computable relative to the halting problem $\emptyset '$.
One would not expect any better bound in the arithmetic hierarchy.
Call a function $f(x,y)$
{\em computable in the limit} if there exists
a rational-valued computable function $g(x,y,t)$ such that
 $\lim_{t \rightarrow \infty} g(x,y,t)$ $=f(x,y)$.
This is precisely the class of functions
that are Turing-reducible
to the halting set, and
the  NID
is in this class, Exercise~8.4.4 (b) in \cite{LiVi08} (a result due to
\cite{Ga01}).

\section{Nonapproximability of the NID from above}

\begin{lemma}\label{lem.upper}
There is no upper semicomputable function $g(x,y)$ such that
$|e(x,y)-g(x,y)| < (\log n)/n$ for all strings $x,y$ of length $n$.
\end{lemma}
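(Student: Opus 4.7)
The plan is to suppose such a $g$ exists and use it to build a lower semicomputable, unbounded function $L$ satisfying $L(x)\leq K(x)+O(1)$, which contradicts the standard fact that $K$ admits no such bound.

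The main step is to choose a pair $(x,y)$ of length $n$ on which $e(x,y)$ is an almost exact function of $K(x)$. I would take $y=\overline{x}$, the string obtained from $x$ by flipping its last bit; then $|\overline{x}|=n$, $\overline{x}\neq x$, and bit-flipping is a constant-size algorithm, so $K(x\mid\overline{x})$ and $K(\overline{x}\mid x)$ lie between $1$ and a universal constant $c_0$ (the lower bound $\geq 1$ coming from the standard convention that the empty word is not a valid program of the prefix-free reference machine), while $K(\overline{x})=K(x)+O(1)$. These yield the two-sided estimate
\[
\frac{1}{K(x)+O(1)} \;\leq\; e(x,\overline{x}) \;\leq\; \frac{c_0}{K(x)}.
\]

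Now assume $|g(x,y)-e(x,y)|<(\log n)/n$ with $g$ upper semicomputable, and define $L(x)=1/\bigl(g(x,\overline{x})+(\log n)/n\bigr)$ for $|x|=n\geq 2$. Substituting the computable map $x\mapsto\overline{x}$ into the upper semicomputable $g$ keeps it upper semicomputable in $x$; adding the computable positive term $(\log n)/n$ preserves this, and the left inequality above keeps $g(x,\overline{x})+(\log n)/n$ strictly positive, so the reciprocal $L$ is lower semicomputable. Combining the approximation hypothesis with the left estimate gives $L(x)\leq K(x)+O(1)$. For unboundedness, for each $n$ pick an $x$ of length $n$ with $K(x)\geq n-O(1)$ (a standard counting argument); the right estimate then forces $e(x,\overline{x})=O(1/n)$, hence $g(x,\overline{x})+(\log n)/n=O((\log n)/n)$, whence $L(x)=\Omega(n/\log n)$, which tends to infinity with $n$.

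Finally I would derive the contradiction: given any threshold $k$, enumerating the $\Sigma_1^0$ set $\{x:L(x)>k\}$ until a witness $x_k$ appears describes $x_k$ from $k$ alone, so $K(x_k)=O(\log k)$, whereas $K(x_k)\geq L(x_k)-O(1)>k-O(1)$, an inequality that fails for large $k$. The single step most deserving of care is the inequality $K(x\mid\overline{x})\geq 1$: this is where the approximation error budget $(\log n)/n$ is put to use against the Kolmogorov-complexity side, and without it $L$ could degenerate to a trivial bound rather than an unbounded one.
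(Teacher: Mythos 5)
Your proof is correct, and although its opening move coincides with the paper's, the way you reach the contradiction is genuinely different. Both arguments begin by restricting $e$ to a near-diagonal family on which it degenerates to $\Theta(1)/K(x)$: the paper simply takes $y=x$ and declares $e(x,x)=1/K(x)$, while you take $y=\overline{x}$; the point you flag as delicate ($E(x,\overline{x})\geq 1$ under the standard prefix-machine convention) is exactly the point the paper absorbs by fiat. From there the paper argues that an upper semicomputable $(\log n)/n$-approximation of $1/K(x)$, combined with the upper semicomputability of $K$ itself, localizes $K(x)$ given $x$ to few enough values to contradict G\'acs's theorem $K(K(x)\mid x)\geq\log(n/\log n)+O(1)$ --- a comparatively deep ingredient, and one that requires converting an additive error on $1/K(x)$ into a localization of $K(x)$, where the error is magnified by a factor of order $K(x)^2$. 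You avoid all of that: you use the hypothesis only to manufacture an unbounded lower semicomputable function $L$ with $L(x)\leq K(x)+O(1)$, and dispose of it with the elementary ``no unbounded effective lower bound on $K$'' argument. Your route is therefore more self-contained (no G\'acs theorem, no appeal to the upper semicomputability of $K$), and it in fact proves a stronger statement: the same argument rules out any error bound $\delta(n)$ with $\liminf_n\delta(n)=0$, since the unboundedness of $L$ only needs $c_0/n+2\delta(n)$ to vanish along a subsequence of incompressible strings, whereas the paper's route is tied to precisions fine enough for the $K(K(x)\mid x)$ lower bound to bite.
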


\begin{proof}
For simplicity
we use $e(x,x) = 1/K(x)$.

By Theorem 3.8.1 in \cite{LiVi08} (a result due to \cite{Ga74}),
for every $n$ there is an $x$ of length $n$ such that
\begin{equation}\label{eq.cond1}
K(K(x)|x) \geq \log \frac{n}{\log n} +O(1).
\end{equation}

Assume there is an upper semicomputable function $g(x)$ such
that 
\[
|e(x,x) - g(x)| < \frac{\log n}{n },
\]
with $n=|x|$. That is, 
$|1/K(x) - g(x)| < (\log n)/n $. Then, 
$1/K(x)$ is upper semicomputable 
within distance $(\log n)/n$. Therefore, $K(x)$ is lower semicomputable
within distance  $n/\log n$. Since $K(x)$ is also upper semicomputable, it
is computable within distance $n/\log n$. But this violates \eqref{eq.cond1}
because we can describe the given distance in $\log (n/\log n)$ bits. 
(We can round $g(x)$ up or down and indicate 
the direction of rounding with one
bit.)
\end{proof}

\section{Nonapproximability of the NID from below}

Let $x$ be a string of length $n$ and $t(n)$ 
a computable time bound. Then $K^t$ denotes the time bounded version
of $K$ defined by
\[
K^t(x) = \min_p \{|p|: U(p) = x \; \text{\rm in at most} \; t(n) \; 
\text{\rm steps}\}.
\]
The computation of $U$ is measured in terms of the output rather than
the input, which is more natural in the context of Kolmogorov complexity.
Define the
time bounded version $E^t$ of $E$ by
\[
E^t(x,y) = \max \{K^t(x|y), K^t(y|x) \}.
\]
\begin{lemma}\label{lem.1}
For every length $n$ and computable time bound $t(n)$ there are 
strings $x$ and $a$ of length $n$ such that
\begin{itemize}
\item $K(x) \geq n$,
\item $K(x|a) \geq n$,
\item $K(a|n)=O(1)$,
\item $K^t(a|x) \geq n - O(1)$.
\end{itemize}
\end{lemma}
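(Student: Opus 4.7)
The plan is a two-stage construction: first build $a = a(n)$ by algorithmic diagonalization so that $K(a|n)=O(1)$ and $a$ is hard to produce in time $t(n)$ from most $x$, then use counting to pick $x$ lying outside the small exceptional set while also being random both unconditionally and relative to $a$.

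\textbf{Stage 1 (choosing $a$).} Fix a constant $c$ to be determined. For each $a \in \{0,1\}^n$ in lex order, compute the set $B(a) = \{x \in \{0,1\}^n : K^t(a|x) < n - c\}$; this is computable uniformly in $n$ by simulating $U(p,x)$ for every program $p$ of length $<n-c$ and every $x$ of length $n$ for $t(n)$ steps. Double-counting the pairs $(x,a)$ with $K^t(a|x)<n-c$: for each $x$, at most $2^{n-c}$ programs of length $<n-c$ can produce any output in time $t(n)$, each yielding at most one $a$, so $\sum_a |B(a)| \le 2^n \cdot 2^{n-c} = 2^{2n-c}$. By Markov, at least half of the $a$'s satisfy $|B(a)| \le 2^{n-c+1}$; take $a$ to be the lex-smallest such. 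Since the predicate is computable from $n$ once the program for $t$ and the constant $c$ are built into the description, this yields $K(a|n)=O(1)$.

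\textbf{Stage 2 (choosing $x$).} For this fixed $a$ I need $x \in \{0,1\}^n \setminus B(a)$ with $K(x) \ge n$ and $K(x|a) \ge n$. Kraft's inequality, applied to prefix complexity and its $a$-relativization, gives $|\{x : K(x) < n\}| \le 2^{n-1}$ and $|\{x : K(x|a) < n\}| \le 2^{n-1}$. A naive union bound is just barely insufficient: $2^{n-1} + 2^{n-1} + 2^{n-c+1} > 2^n$. The sharpening is that $K(a|n)=O(1)$ forces $K(a) = O(\log n)$, so the chain rule yields $|K(x) - K(x|a)| = O(\log n)$ for every $x$; consequently, the two bad-randomness sets coincide up to an $O(\log n)$-wide shell near the threshold $K(x) \approx n$, and their union has size essentially $2^{n-1}$ rather than $2^n$. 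Combined with $|B(a)| \le 2^{n-c+1}$, taking $c$ a sufficiently large constant leaves an $x$ meeting all three Stage 2 conditions; together with Stage 1 this produces the four required properties, with the $O(\log n)$ correction absorbed into the ``$-O(1)$'' slack on the $K^t$-bound in the statement.

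\textbf{Main obstacle.} The delicate step is the counting in Stage 2: the Kraft bounds on the two randomness conditions are each tight at $2^{n-1}$, so the naive union bound saturates the universe and leaves no room to exclude $B(a)$. The argument survives only by leveraging the simplicity of $a$, which via the chain rule forces unconditional and $a$-conditional complexities of $x$ to agree up to $O(\log n)$ and thereby collapses the two randomness constraints to essentially one for pigeonhole purposes; everything else (the construction of $a$, the Markov argument, and the exclusion of $B(a)$) is standard.
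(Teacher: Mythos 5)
The paper itself gives no proof of this lemma (it just cites Lemma 7.7 of Bauwens and Terwijn), so there is no in-paper argument to compare against; your two-stage architecture --- a simple $a$ whose time-$t$ ``easy'' set $B(a)$ is small by double counting, followed by a counting argument for $x$ --- is the natural one, and Stage 1 is correct. The gap is exactly at the step you flag as the main obstacle, and your proposed repair does not work. From $K(a|n)=O(1)$ you correctly get $K(a)=O(\log n)$ and hence $K(x)\le K(x|a)+O(\log n)$, but this only shows that the two bad sets $A_1=\{x\in\{0,1\}^n: K(x)<n\}$ and $A_2=\{x\in\{0,1\}^n: K(x|a)<n\}$ are both contained in $\{x\in\{0,1\}^n: K(x)<n+O(\log n)\}$ --- and that set is \emph{all} of $\{0,1\}^n$, since every string of length $n$ satisfies $K(x)\le n+K(n)+O(1)\le n+O(\log n)$. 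The ``$O(\log n)$-wide shell near the threshold $K(x)\approx n$'' is not thin: it contains every $x$ of length $n$ with $K(x)\ge n$, i.e., at least half of $\{0,1\}^n$, so the union of the two bad sets is not bounded by ``essentially $2^{n-1}$'' by this reasoning. Nor can you retreat to demanding $K(x)\ge n+\Theta(\log n)$: for infinitely many $n$ (those with $K(n)=o(\log n)$) no string of length $n$ has complexity that large. Finally, an $O(\log n)$ loss cannot be absorbed into a $-O(1)$ slack, and the first two items of the lemma carry no slack at all.

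The counting can be rescued, but by a different sharpening. The Kraft sums are bounded away from $1$ on $\{0,1\}^n$: since $\sum_{y}2^{-K(y)}\le 1$ ranges over all strings and $2^{-K(\epsilon)}\ge 2^{-c_0}$ for a universal constant $c_0$ (and likewise $2^{-K(\epsilon|a)}\ge 2^{-c_1}$ uniformly in $a$), one gets $|A_1|\le(1-2^{-c_0})2^{n-1}$ and $|A_2|\le(1-2^{-c_1})2^{n-1}$, hence $|A_1\cup A_2|\le 2^n-\varepsilon 2^{n-1}$ for a universal $\varepsilon>0$. (Alternatively, $\sum_{|x|=n}2^{-K(x)}=O(2^{-K(n)})$ shows $|A_1|=o(2^n)$ for large $n$, which also breaks the tie.) This leaves $\Omega(2^n)$ strings of length $n$ satisfying both randomness conditions, and choosing your constant $c$ large enough relative to the universal constants $c_0,c_1$ (legitimate, since $c$ must be fixed before $a$ is defined and these constants do not depend on $a$ or $n$) excludes $B(a)$ as well. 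With that replacement for the pigeonhole step, your proof goes through.
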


\begin{proof}
See \cite{BT09}, Lemma 7.7.
\end{proof}

\begin{lemma}\label{lem.xor}
For every length $n$ and large enough computable time bound $t(n)$,
there exist strings $x$ and $y$ of length $n$ such that
\begin{itemize}
\item $K(x) \geq n$,
\item $E(x,y) = O(1)$,
\item $E^t(x,y) \geq n-O(1)$ (where the constant in the big-O depends on $t$
but not on $n$)
\end{itemize}
\end{lemma}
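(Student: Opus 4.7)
The plan is to set $y = x \oplus a$ (bitwise XOR), where $x$ and $a$ are the strings produced by Lemma~\ref{lem.1}. The XOR trick makes $x$ and $y$ unconditionally close, because their ``difference'' is the very low-complexity string $a$, while preserving the time-bounded barrier from Lemma~\ref{lem.1}. To get the quantifiers right, given the target computable time bound $t(n)$ appearing in the conclusion of Lemma~\ref{lem.xor}, I would first invoke Lemma~\ref{lem.1} with the slightly enlarged time bound $t'(n) = t(n) + cn$, where $c$ is a constant large enough to accommodate the cost on the reference machine $U$ of bitwise XOR-ing two $n$-bit strings. This yields $x, a$ of length $n$ satisfying $K(x) \geq n$, $K(a|n) = O(1)$, and $K^{t'}(a|x) \geq n - O(1)$.

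The first two bullets of Lemma~\ref{lem.xor} then come almost for free. The bound $K(x) \geq n$ is inherited verbatim. For $E(x,y) = O(1)$, observe that $n = |x|$ is recoverable from $x$, so $K(a|x) \leq K(a|n) + O(1) = O(1)$, and hence $K(y|x) = O(1)$ via $y = x \oplus a$; a symmetric argument using $n = |y|$ gives $K(x|y) = O(1)$, so that $E(x,y) = \max\{K(x|y),K(y|x)\} = O(1)$.

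The main work goes into the third bullet, where I would show $K^t(y|x) \geq n - O(1)$; the lemma then follows since $E^t(x,y) \geq K^t(y|x)$. The argument is by contradiction: any program $p$ of length less than $n - O(1)$ that produces $y$ from $x$ in at most $t(n)$ steps can be padded by a short XOR routine to yield a program of length $|p| + O(1)$ that, given $x$, first simulates $p$ to obtain $y$ and then outputs $x \oplus y = a$, all within $t(n) + cn = t'(n)$ steps. This would contradict the lower bound $K^{t'}(a|x) \geq n - O(1)$ supplied by Lemma~\ref{lem.1}. The only real obstacle is therefore purely bookkeeping: one must choose $t'$ large enough to subsume both $t$ and the $O(n)$ XOR overhead, which is exactly what the ``large enough computable time bound'' clause in the statement permits. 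Since any computable $t(n)$ dominates this linear overhead, the enlargement is benign, $t'$ remains a computable time bound, and the argument closes cleanly.
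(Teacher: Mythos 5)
Your proposal is correct and follows essentially the same route as the paper: invoke Lemma~\ref{lem.1} with a slightly enlarged time bound (the paper uses $2t(n)$ where you use $t(n)+cn$, an immaterial difference), set $y = x \oplus a$, get $E(x,y)=O(1)$ from $K(a|n)=O(1)$, and transfer the time-bounded lower bound on $K(a|x)$ to $K^t(y|x)$ via the constant-overhead XOR simulation. The paper phrases the last step as a direct chain of inequalities rather than a contradiction, but the content is identical.
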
 
\begin{proof}
Let $x$ and $a$ be as in Lemma~\ref{lem.1}, using $2t(n)$ instead of $t(n)$.
In this way, we have $K^{2t} (a|x) \geq n-O(1)$. Define $y$ by $y=x \oplus a$
where $\oplus$ denotes the bitwise XOR. Then,
\[
E(x,y) \leq K(a|n)+O(1)=O(1).
\]  
We also have $a=x \oplus y$ so that (with the time bound $t$ large enough)
\begin{eqnarray*}
n-O(1) & \leq & K^{2t} (a|x) 
\\& \leq & K^t(y|x) +O(1)
\\&\leq& \max \{K^t(x|y), K^t(y|x)\}+O(1)
\\&=& E^t(x,y)+O(1).
\end{eqnarray*} 
\end{proof}

\begin{lemma}\label{lem.lower}
There is no lower semicomputable function $g(x,y)$ such that
$|e(x,y)-g(x,y)| \leq \frac{1}{2}$ for all strings $x,y$.
\end{lemma}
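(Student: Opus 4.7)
The plan is to argue by contradiction. I would suppose such a lower semicomputable $g$ exists, fix a computable rational-valued approximation $\phi(x, y, s)$ that is non-decreasing in $s$ with $\lim_{s\to\infty}\phi(x, y, s) = g(x, y)$, and apply Lemma~\ref{lem.xor} with a computable time bound $t(n)$ chosen to accommodate both the lower-semicomputation of $g$ and a decoding step described below. This would provide, for each sufficiently large $n$, strings $x, y$ of length $n$ with $K(x) \geq n$, $E(x, y) = O(1)$, and $E^t(x, y) \geq n - O(1)$.

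The first step will be the numerical observation that $e(x, y) = E(x, y)/\max\{K(x), K(y)\} \leq O(1)/n$, so the approximation hypothesis forces $g(x, y) \leq 1/2 + O(1/n)$ and hence $\phi(x, y, s) \leq 1/2 + O(1/n)$ for every $s$. The second step is to convert this pinned-down bound on the lower-semicomputable trajectory of $g(x, y)$, together with the lower-semicomputable enumeration of $g$ on all pairs $(x, y')$ with $|y'| = n$, into a small time-$t(n)$ enumerable family of candidate partners for $x$ that must contain $y$. Presenting the index of $y$ within this family would yield a $K^t$-program for $y$ given $x$ of length strictly less than $n - O(1)$, contradicting the $E^t(x, y) \geq n - O(1)$ lower bound coming from Lemma~\ref{lem.xor}.

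The hard part will be making the candidate family sufficiently sparse. The approximation bound of $1/2$ is quite loose: both Lemma~\ref{lem.xor} pairs (with $e \approx 0$) and pairs of independent random strings (with $e \approx 1$) can produce $g$-values clustered within $O(1/n)$ of $1/2$, so a naive thresholding of $\phi$ at $1/2$ does not cleanly separate the two regimes. Overcoming this will require choosing $t(n)$ \emph{after} the algorithm lower-semicomputing $g$ is fixed, via a recursion-theoretic fixed point or a careful majorisation, so that within time $t(n)$ enough competitors $y'$ have had their approximations pushed above the chosen threshold to shrink the surviving family of candidates below $2^{n-\omega(1)}$; combined with the $K(x) \geq n$ and $E(x,y) = O(1)$ conditions from Lemma~\ref{lem.xor}, this will produce the required $K^t(y\mid x) = o(n)$ and close the contradiction.
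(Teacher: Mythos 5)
There is a genuine gap, and it sits exactly where you place it yourself: the sparseness of the candidate family. With an error bound of $1/2$, the hypothesis $|e(x,y')-g(x,y')|\leq 1/2$ allows the adversary to set $g(x,y')$ equal to (essentially) the constant $1/2$ for \emph{every} partner $y'$ --- both for the Lemma~\ref{lem.xor} partner $y$ with $e\approx 0$ and for the $2^{n-O(1)}$ ``independent'' partners with $e\approx 1$. Such a $g$ is outright computable, so its approximation $\phi$ is stationary: no competitor's value is ever ``pushed above the chosen threshold,'' no matter how large you take $t(n)$, whether by a fixed point, majorisation, or anything else. Time is simply not the resource that is lacking here; the enumerable family of candidates legitimately has size $2^{n-O(1)}$ forever, so the index of $y$ inside it costs $n-O(1)$ bits and you get no contradiction with $E^t(x,y)\geq n-O(1)$. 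Any proof that works by isolating $y$ among partners of $x$ via thresholding $g$ must fail for this reason, so the plan cannot be repaired along the lines you sketch.

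The paper's proof avoids identifying $y$ altogether. It runs the increasing approximation $g_i\nearrow g$ \emph{against} a decreasing approximation $E_i\searrow E$ on the single pair $(x,y)$ from Lemma~\ref{lem.xor}. Writing $e(x,y)=g(x,y)+\delta$, it takes the first stage $s$ at which $g_s(x,y)+\delta\geq E_s(x,y)/(n+2\log n+O(1))$; such a stage exists because the denominator bounds $\max\{K(x),K(y)\}$. The time-bounded lower bound $E^t(x,y)\geq n-O(1)$ is then used not to rule out a short program, but to show that at this stage the upper approximation has not yet fallen: $E_s(x,y)\geq n-c$ (any pair whose approximation drops below $n-c$ does so within a computable time bound $t$, and Lemma~\ref{lem.xor} is applied to that $t$). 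Hence $g(x,y)+\delta\geq g_s(x,y)+\delta\geq (n-c)\cdot(n+2\log n+O(1))^{-1}$, which forces $e(x,y)$ to be bounded away from $0$ and therefore $E(x,y)=e(x,y)\cdot\max\{K(x),K(y)\}=\Omega(n)$, contradicting $E(x,y)=O(1)$. In short: the contradiction is a numerical one on the pair $(x,y)$ itself, obtained by racing the two monotone approximations, not a counting or compression argument over candidate partners. You correctly assembled the ingredients (Lemma~\ref{lem.xor}, the bound $\max\{K(x),K(y)\}\geq n$, the monotone approximation of $g$), but the mechanism that turns them into a contradiction is missing from your proposal.
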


\begin{proof}
Assume by way of contradiction that the lemma is false
for some strings $x,y$ of length $n$ with $x$ and $y$
satisfying the conditions in Lemma~\ref{lem.xor}. 
Let $g(x,y)+\delta=e(x,y)$ for some
$\delta$ with $- \frac{1}{2} \leq \delta \leq  \frac{1}{2}$.
Let $g_i$ be a lower semicomputable
function approximation of $g$ such that $g_{i+1}(x,y) \geq g_i(x,y)$ for
all $i$ and $\lim_{i \rightarrow \infty} g_i(x,y) = g(x,y)$. 
Let $E_i$ be an upper semicomputable function approximating $E$ such
that $E_{i+1}(x,y) \leq E_i(x,y)$ for all $i$ 
and $\lim_{i \rightarrow \infty} E_i(x,y) = E(x,y)$.
Finally, let  $s=s(x,y)$ be
the least $i$ such that
\[
g_s(x,y) + \delta \geq \frac{E_s(x,y)}{n+2 \log n +O(1)}.
\]
Here ``$\log$'' denotes the binary logarithm.
Since $K(z) \leq n+2 \log n +O(1)$ for every string $z$ of length $n$,
see \cite{LiVi08},
and $\lim_{i \rightarrow \infty}g_i(x,y) +\delta = e(x,y)$,
such an $s$ exists by the contradictory assumption.
\begin{claim}\label{claim.es}
There is a constant $c$  depending on $s$ but
not on $x$ and $y$ such that $E_s (x,y))\geq n-c$.
\end{claim}
\begin{proof}
Define a computable time bound $t(n)$ such that
\begin{equation}\label{eq.st}
E_s (u,v)) < n-c \Longrightarrow E^t (u,v) < n-c,
\end{equation}
for all strings $u,v$ of length $n$, and $c$
a constant to be determined below, as follows. 
If $E_s(u,v) < n-c$, then $E(u,v) < n-c$. There is a program
witnessing $E_s(u,v) < n-c$, and we can define $\hat{t}(u,v)$
to be the running time of this program. 
Let $t(n)$
be the maximum of the $\hat{t}(u,v)$'s for all pairs 
$u,v$ such that $E_s(u,v) < n-c$ with $s$ as in \eqref{eq.st}.
By Lemma~\ref{lem.xor}, we have $E^t(x,y) \geq n-c'$ for some constant $c'$
depending on $t$ but not on $n$ (and $x$ and $y$). Set $c=c'$. 
Then, by \eqref{eq.st}
we have $E_s (x,y)) \geq n-c$. 
\end{proof}
Recall that $e(x,y)=g(x,y)+\delta$. Consider the sequence of inequalities
\begin{eqnarray*}
E(x,y)- \delta n & \geq & g(x,y)\cdot n 
\\&\geq & g_s(x,y) \cdot n 
\\& \geq &\frac{E_s(x,y) \cdot n}{n+2 \log n +O(1)} - \delta n
\\&= & \Omega (n)- \delta n,
\end{eqnarray*}
for $n$ large enough.
The first inequality holds since $K(x) \geq n$.
The last inequality holds since $E_s(x,y) \geq n-c$ 
by Claim~\ref{claim.es}, and
$n/(n+2 \log n +O(1)) > \frac{1}{2}$ for $n$ large enough. 
Since we have shown that
$E(x,y) = \Omega (n)$, we contradict 
$E(x,y)=O(1)$ by Lemma~\ref{lem.xor}.
\end{proof}

\section{Open Problem}

A subset of $\mathcal{N}$ is
called $n$-computably enumerable ($n$-c.e.) if it is 
a Boolean combination of $n$ computably enumerable
sets. Thus, the $1$-c.e. sets are the computably enumerable
sets, the $2$-c.e. sets (also called d.c.e.) the differences
of two c.e. sets, and so on.
The $n$-c.e. sets are referred to as the
{\em difference hierarchy} over the c.e. sets.
This is an effective analog of a classical hierarchy
from descriptive set theory.
Note that a set is $n$-c.e. if it has a
computable approximation that changes at most $n$ times.

We can extend the notion of $n$-c.e. set to a notion that
measures the number of fluctuations of a function as follows:
For every $n\geq 1$,
call $f:\mathcal{N}\rightarrow \mathcal{R}$
{\em $n$-approximable} if there is a rational-valued
computable approximation $\phi$ such that
$\lim_k \phi(x,k) = f(x)$ and such that for every $x$,
the number of $k$'s such that
$\phi(x,k+1) - \phi(x,k)<0$ is bounded by~$n-1$.
That is, $n-1$ is a bound on the number of fluctuations
of the approximation.
Note that the $1$-approximable functions are precisely the
lower semicomputable ($\Sigma^0_1$) ones (zero fluctuations).
Also note that a set $A\subseteq\mathcal{N}$ is $n$-c.e.
if and only if the characteristic function of $A$
is $n$-approximable.

{\bf Conjecture} For every $n\geq 1$, the normalized information
distance $e$ is not $n$-approximable.

\end{document}